\newenvironment{algorithm-hbox}{\hbadness=10000\begin{algorithm}}{\end{algorithm}}
\newtheorem{theorem}{Theorem}
\newtheorem{claim}[theorem]{Claim}
\theoremstyle{remark}
\newtheorem{claim}[theorem]{Claim}
\let\leq\leqslant
\let\geq\geqslant
\let\subset\subseteq
\let\epsilon\varepsilon
\def\calG{\mathcal{G}}
\DeclareMathOperator{\color}{color}
\newcommand{\set}[1]{\left\{#1\right\}}
\subjclass[2010]{05C15, 05C85, 68R10}
\keywords{on-line algorithm, graph coloring}
\begin{document}

\title[On-line algorithm for coloring bipartite graphs]{An on-line competitive algorithm for coloring bipartite graphs without long induced paths}

%

\author[P.~Micek]{Piotr Micek}
\address[P.~Micek]{Theoretical Computer Science Department, Faculty of Mathematics and Computer Science, Jagiellonian University, Krak\'{o}w, Poland}
\email{piotr.micek@tcs.uj.edu.pl}

\author[V.~Wiechert]{Veit Wiechert}
\address[V.~Wiechert]{Institut f\"ur Mathematik\\
 Technische Universit\"at Berlin\\
 Berlin \\
 Germany}
\email{wiechert@math.tu-berlin.de}

\begin{abstract}
The existence of an on-line competitive algorithm for coloring bipartite graphs remains a tantalizing open problem.
So far there are only partial positive results for bipartite graphs with certain small forbidden graphs as induced subgraphs. 
We propose a new on-line competitive coloring algorithm for $P_9$-free bipartite graphs. 
\end{abstract}

\thanks{This paper is an extended version of~\cite{MW-proc} from the proceedings of ISAAC 2014.}
\thanks{V.\ Wiechert is supported by Deutsche Forschungsgemeinschaft within the research training group `Methods for Discrete Structures' (GRK 1408).} 
\thanks{P.\ Micek is supported by Polish National Science Center UMO-2011/03/D/ST6/01370.}

\maketitle

\section{Introduction}
A \emph{proper coloring} of a graph is an assignment of colors to its vertices such that adjacent vertices receive distinct colors.
It is easy to devise an (linear time) algorithm for $2$-coloring bipartite graphs. 
Now, imagine that an algorithm receives vertices of a graph one by one knowing only the adjacency status of the vertex to vertices presented so far. 
The color of the current vertex must be fixed by the algorithm before the next vertex is revealed and it cannot be changed afterwards.
This kind of algorithm is called an \emph{on-line} coloring algorithm.

Formally, an \emph{on-line graph} $(G,\pi)$ is a graph $G$ with a permutation $\pi$ of its vertices.
An \emph{on-line coloring algorithm} $A$ takes an on-line graph $(G,\pi)$, say $\pi=(v_1,\ldots,v_n)$, as an input.
It produces a proper coloring of the vertices of $G$ where the color of a vertex $v_i$, for $i=1,\ldots,n$, depends only on the subgraph of $G$ induced by $v_1,\ldots,v_i$.
It is convenient to imagine that consecutive vertices along $\pi$ are revealed by some adaptive (malicious) adversary and the coloring process is a game between that adversary and the on-line algorithm.

Still, it is an easy exercise to show that if an adversary presents a bipartite graph and all the time the graph presented so far is connected then there is an on-line algorithm $2$-coloring these graphs.
But if an adversary can present a bipartite graph without any additional constraints then (s)he can trick out any on-line algorithm to use an arbitrary number of colors!

Indeed, there is a strategy for adversary forcing any on-line algorithm to use at least $\lfloor\log n\rfloor+1$ colors on a forest of size $n$. 
On the other hand, the First-Fit algorithm (that is an on-line algorithm coloring each incoming vertex with the least admissible natural number) uses at most $\lfloor\log n\rfloor+1$ colors on forests of size $n$.
When the game is played on bipartite graphs, an adversary can easily trick out First-Fit and force $\lceil\frac{n}{2}\rceil$ colors on a bipartite graph of size $n$.
Lov{\'a}sz, Saks and Trotter~\cite{LovaszST89} proposed a simple on-line algorithm (in fact as an exercise; see also \cite{Kierstead98}) using at most $2\log n+1$ colors on bipartite graphs of size $n$. 
This is best possible up to an additive constant as Gutowski et al.\ \cite{GutowskiKM14} showed that there is a strategy for adversary forcing any on-line algorithm to use at least $2\log n -10$ colors on a bipartite graph of size $n$.

For an on-line algorithm $A$ by $A(G,\pi)$ we mean the number of colors that $A$ uses against an adversary presenting graph $G$ with presentation order $\pi$.

An on-line coloring algorithm $A$ is \emph{competitive} on a class of graphs $\calG$ if there is a function $f$ such that for every $G\in\calG$ and permutation $\pi$ of vertices of $G$ we have $A(G,\pi)\leq f(\chi(G))$.
As we have discussed, there is no competitive coloring algorithm for forests. 
But there are reasonable classes of graphs admitting competitive algorithms, e.g., interval graphs can be colored on-line with at most $3\chi-2$ colors (where $\chi$ is the chromatic number of the presented graph; see~\cite{KiersteadT81}) and cocomparability graphs can be colored on-line with a number of colors bounded by a tower function in terms of $\chi$ (see~\cite{KiersteadPT94}).
Also classes of graphs defined in terms of forbidden induced subgraphs were investigated in this context.
For example, $P_4$-free graphs (also known as cographs) are colored by First-Fit optimally, i.e.\ with $\chi$ colors, since any maximal independent set meets all maximal cliques in a $P_4$-free graph. 
Also $P_5$-free graphs can be colored on-line with $O(4^\chi)$ colors (see \cite{KiersteadPT95}).
And to complete the picture there is no competitive algorithm for $P_6$-free graphs as Gy\'{a}rf\'{a}s and Lehel \cite{GL88} showed a strategy for adversary forcing any on-line algorithm to use an arbitrary number of colors on bipartite $P_6$-free graphs.

Confronted with so many negative results, it is not surprising that Gy\'{a}rf\'{a}s, Kir\'{a}ly and Lehel \cite{GKL97} introduced a relaxed version of competitiveness for on-line algorithms. 
The idea is to measure the efficiency of an on-line algorithm by comparing it to the best on-line algorithm for a given input (instead of the chromatic number).
Hence, the \emph{on-line chromatic number} of a graph $G$ is defined as
\[
 \chi_*(G) = \inf_A \max_{\pi} A(G,\pi),
\]
where the infimum is taken over all on-line algorithms $A$ and the maximum is taken over all permutation $\pi$ of vertices of $G$.
An on-line algorithm $A$ is \emph{on-line competitive} for a class of graphs $\calG$, if there is a function $f$ such that for every $G\in\calG$ and permutation $\pi$ of vertices of $G$ we have $A(G,\pi) \leq f(\chi_*(G))$.

Why are on-line competitive algorithms interesting?
Imagine that you design an algorithm and the input graph is not known in advance. 
If your algorithm is on-line competitive then you have an insurance that whenever your algorithm uses many colors on some graph $G$ with presentation order $\pi$ then any other on-line algorithm may be also forced to use many colors on the same graph $G$ with some presentation order $\pi'$ (and it includes also those on-line algorithms which are designed only for this single graph $G$!).
The idea of comparing the outputs of two on-line algorithms directly (not via the optimal off-line result) is present in the literature. 
We refer the reader to~\cite{BF07}, where a number of measures are discussed in the context of on-line bin packing problems.
In particular, the relative worst case ratio, introduced there, is closely related to our setting for on-line colorings.

It may be true that there is an on-line competitive algorithm for all graphs. 
This is open, as well as for the class of all bipartite graphs. 
To the best of the authors knowledge, there is no promissing approach for the negative answer for these questions.
However, there are some partial positive results. 
Gy\'{a}rf\'{a}s and Lehel \cite{GL90} have shown that First-Fit is on-line competitive for forests and it is even optimal in the sense that if First-Fit uses $k$ colors on $G$ then the on-line chromatic number of $G$ is $k$ as well.
They also have shown \cite{GKL97} that First-Fit is competitive (with an exponential bounding function) for graphs of girth at least $5$. 
Finally, Broersma, Capponi and Paulusma \cite{BCP08} proposed an on-line coloring algorithm for $P_7$-free bipartite graphs using at most $8\chi_*+8$ colors on graphs with on-line chromatic number $\chi_*$.

The contribution of this paper is the following theorem.

\begin{theorem}\label{thm:main}
There is an on-line competitive algorithm coloring $P_9$-free bipartite graphs
and using at most $6(\chi_*+1)^2$ colors, where $\chi_*$ is the on-line chromatic number of the presented graph.
\end{theorem}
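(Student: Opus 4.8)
The plan is to exploit a single structural consequence of $P_9$-freeness and then build an on-line algorithm around it. \textbf{Structural observation.} In any graph a shortest path is induced, so a connected graph with no induced $P_9$ has no shortest path on nine vertices; hence every connected $P_9$-free graph has diameter at most $7$. Consequently, at every moment of the game each connected component of the already-presented subgraph is a bipartite graph of diameter at most $7$, and since a connected bipartite graph has a unique bipartition (up to swapping the two sides), the ``side'' of each vertex is determined within its component up to one global flip.

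\textbf{The algorithm.} I would maintain on-line the connected components of the presented subgraph, and for each component $C$ an orientation of its bipartition into a $0$-side and a $1$-side together with a bounded integer label. When a new vertex $v$ arrives, its presented neighbours lie in some components $C_1,\dots,C_t$, which together with $v$ merge into one component $C$. Because $G$ is bipartite, the sides of all vertices of $C$ are forced relative to $v$ (its neighbours go opposite to $v$, and this propagates through each $C_i$), so I reorient each $C_i$ accordingly. The colour of $v$ is then a pair $(\ell,s)$, where $s\in\{0,1\}$ is the side of $v$ in $C$ and $\ell$ is a label chosen by a First-Fit rule among the labels and sides of $C_1,\dots,C_t$. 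The point is that an already-coloured vertex keeps its colour forever, so whenever a merge forces a reorientation that would clash with colours already committed in some $C_i$, the First-Fit rule is designed to select a \emph{fresh} label for the affected side rather than to recolour, guaranteeing that two adjacent vertices (always in a common component and on opposite sides) never receive equal pairs. The total number of colours is then $2$ times the number of labels ever used.

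\textbf{The bound.} To prove on-line competitiveness I would show that the number of distinct labels the algorithm can be driven to use is $O(\chi_*)$, so that the total number of colours is $O(\chi_*^2)$, and then optimise constants to reach $6(\chi_*+1)^2$. The engine is a charging/extraction argument: whenever a vertex is assigned a large label $\ell$, the First-Fit rule certifies a chain of earlier reorientation conflicts, and I would translate this chain into an explicit adversary strategy forcing \emph{any} on-line algorithm to use many colours on an induced subgraph of $G$ presented in the induced order. Here the diameter bound is essential: it keeps every path used in the extracted witness on at most eight vertices, so the witness stays $P_9$-free and is a legitimate certificate that $\chi_*(G)$ is large.

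\textbf{Main obstacle.} The hard part is exactly this extraction step, and it is the reason the bound is quadratic rather than linear. There are two independent sources of colour growth, namely conflicts coming from cross-component \emph{reorientations} and conflicts coming from repeated forcing \emph{within} a bounded-diameter component, and the interaction of the two is what must be compiled into a single forcing strategy for the adversary. I expect the crux to be designing the First-Fit label rule so that a label of value $\ell$ provably witnesses that the adversary can force about $\sqrt{\ell}$ colours against every algorithm, giving $\chi_*(G)=\Omega(\sqrt{\ell})$ and hence $\ell=O((\chi_*+1)^2)$; controlling these two nested layers simultaneously, while respecting that committed colours are irrevocable, is where the real work lies.
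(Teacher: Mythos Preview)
Your proposal is a plan rather than a proof, and the missing content is precisely the substance of the theorem.

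\textbf{The algorithm is not defined.} You say the label $\ell$ is ``chosen by a First-Fit rule among the labels and sides of $C_1,\dots,C_t$,'' but you never specify the rule. In the paper the algorithm uses three palettes $\{a_i\},\{b_i\},\{c_i\}$: the index $m$ is determined by the largest $i$ with $a_i$ mixed in $C_i[v]$, and a delicate third condition (existence, in the current component, of a vertex with large colour index together with a vertex on the opposite side of $v$ that is \emph{universal} to a deep sub-component) triggers the colour $c_m$. That third palette is not decorative: it is what later guarantees that, when the algorithm has been pushed to a high index, one can locate a vertex universal to two disjoint smaller copies of the forcing graph and glue them into a larger one. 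Your ``side plus First-Fit label'' scheme has no analogue of this mechanism, and without it there is no evident way to extract the forcing subgraph. (There is also an internal inconsistency: you first say the number of labels is $O(\chi_*)$, giving $O(\chi_*)$ colours, and then say a label $\ell$ only witnesses $\chi_*=\Omega(\sqrt{\ell})$, giving $O(\chi_*^2)$ labels.)

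\textbf{The extraction step is asserted, not proved.} You write that a label of value $\ell$ should ``provably witness that the adversary can force about $\sqrt{\ell}$ colours,'' and then concede this is ``where the real work lies.'' That \emph{is} the theorem. The paper carries this out by (i) fixing an explicit family $\{X_k\}$ of $P_6$-free bipartite graphs with $\chi_*(X_k)\ge k$; (ii) defining children and grandchildren of a vertex $v$ via the earliest mixed $a_{k-1}$-pair, which by a merging argument live in distinct sub-components; (iii) using $P_9$-freeness \emph{locally}: two induced $P_5$'s ending at $v$, one in each child-component, would concatenate to an induced $P_9$, so at least one child-component has no such $P_5$, and a short case analysis then produces a neighbour of $v$ universal to a sub-component; and (iv) inductively assembling an induced $X_{\lfloor\sqrt{k/2}\rfloor}$ inside $C_k[v]$ whenever $v$ receives colour $a_k$. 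Your global diameter-$7$ observation is correct but does none of this work: bounded diameter by itself yields neither a universal vertex nor an $X_k$. And your remark that the diameter bound ``keeps the witness $P_9$-free'' is circular---any induced subgraph of $G$ is already $P_9$-free; the issue is building a witness with large $\chi_*$, not keeping it $P_9$-free.

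In short, you have identified the right \emph{shape} of the argument---an on-line algorithm whose large colour usage certifies an induced substructure of large $\chi_*$---but supplied neither the algorithm nor the certification.
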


Note that this is a full version of~\cite{MW-proc} published in the proceedings of ISAAC 2014.
In~\cite{MW-proc}, we discuss how our techniques simplify results for $P_7$-free bipartite graphs.
Already in~\cite{MW-proc}, we presented Algorithm~\ref{algo:bicolormax+} with a proof that it is on-line competitive for $P_8$-free bipartite graphs.
(This may be a good warmup or source of extra intuitions behind the argument in this paper.)

\section{Forcing structure}

In this section we introduce a family of bipartite graphs without long induced paths ($P_6$-free) and with arbitrarily large on-line chromatic number.
Our on-line algorithm, Algorithm~\ref{algo:bicolormax+} has the property that whenever it uses many colors on a graph $G$ 
then $G$ has a \emph{large} graph from our family as an induced subgraph and therefore $G$ has a large on-line chromatic number, as desired.

A connected bipartite graph $G$ has a unique partition of vertices into two independent sets. 
We call these partition sets the \emph{sides} of $G$.
A vertex $v$ in a bipartite graph $G$ is \emph{universal} to a subgraph $C$ of $G$ if $v$ is adjacent to all vertices of $C$ in one of the sides of $G$.

Consider a family of connected bipartite graphs $\set{X_k}_{k\geq1}$ defined recursively as follows. 
Each $X_k$ has a distinguished vertex called the \emph{root}. 
The side of $X_k$ containing the root of $X_k$, we call the \emph{root side} of $X_k$, while the other side we call the \emph{non-root side}.
$X_1$ is a single vertex being the root.
$X_2$ is a single edge with one of its vertices being the root.
$X_k$, for $k\geq3$, is a graph formed by two disjoint copies of $X_{k-1}$, say $X^1_{k-1}$ and $X^2_{k-1}$, with no edge between the copies, and one extra vertex $v$ adjacent to all vertices on the root side of $X^1_{k-1}$ and all vertices on the non-root side of $X^2_{k-1}$. 
The vertex $v$ is the root of $X_k$.
Note that for each $k$, the root of $X_k$ is adjacent to the whole non-root side of $X_k$, i.e., the root of $X_k$ is universal in $X_k$.
See Figure~\ref{fig:universal-struct} for an schematic illustration of the definition of $X_k$.

\begin{figure}[b]
 \centering
 \includegraphics{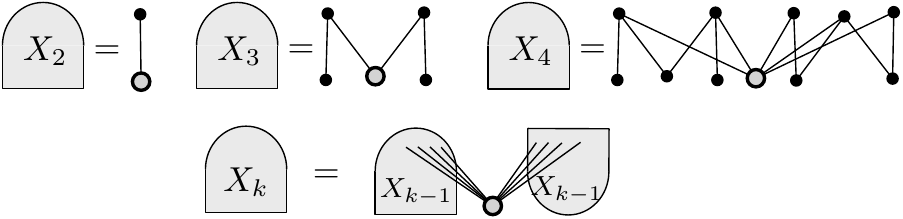}
 \caption{Family of bipartite graphs}
 \label{fig:universal-struct}
\end{figure}

A family of $P_6$-free bipartite graphs with arbitrarily large on-line chromatic number was first presented in \cite{GL88}.
The family $\set{X_k}_{k\geq1}$ was already studied in \cite{BCP06}, in particular Claim \ref{claim:forcing-structure} is proved there.
We encourage the reader to verify that $X_k$ is $P_6$-free for $k\geq1$.

\begin{claim}\label{claim:technical}
Let $k\geq 2$. 
Then, for every binary sequence $\alpha_1,\ldots,\alpha_{k-1}$, 
there are copies $Y_1,\ldots,Y_{k-1}$ of $X_1,\ldots,X_{k-1}$ contained as induced subgraphs in $X_k$, 
on pairwise disjoint sets of vertices and with no edges between the copies, such that 
for every $i\in\{1,\ldots,k-1\}$ the root side of $Y_i$ is contained in the root side of $X_k$, if $\alpha_i=1$, and
the non-root side of $Y_i$ is contained in the root side of $X_k$, if $\alpha_i=0$.
\begin{proof}
We prove the lemma by induction on $k$.
For the base case $k=2$, fix $\alpha_1\in\set{0,1}$. 
Now, $X_2$ is an edge and put $Y_1$ as a single vertex being the root of $X_2$ if $\alpha_1=1$ and being the other vertex if $\alpha_1=0$.
So suppose $k\geq3$.
By the definition $X_k$ consists of two independent copies $X_{k-1}^1$ and $X_{k-1}^2$ of $X_{k-1}$ and a root vertex that is universal to the root side of $X_{k-1}^1$ and to the non-root side of $X_{k-1}^2$.
Let us first consider the case $\alpha_{k-1}=0$.
Then $X_{k-1}^1$ is a copy of $X_{k-1}$ with the non-root on the desired side of $X_k$.
On $X_{k-1}^2$ we apply induction for the sequence $\alpha_1,\ldots,\alpha_{k-2}$ and find $Y_1,\ldots,Y_{k-2}$ copies of $X_1,\ldots,X_{k-2}$ as induced subgraphs of $X_{k-1}^2$ that are pairwise disjoint with no edges between the copies and clearly no edges to $X_{k-1}^1$ as well.
Since the root side of $X_{k-1}^2$ is contained in the root side of $X_k$ we have that for all $i\in\{1,\ldots,k-2\}$ the sides of the roots of $Y_i$ are in the root side of $X_k$ if and only if $\alpha_i=1$.

The case $\alpha_{k-1}=1$ is similar with the difference that we use $X_{k-1}^2$ as a copy of $X_{k-1}$ and that we apply induction on $X_{k-1}^1$ for the sequence $\overline{\alpha_1},\ldots,\overline{\alpha_{k-2}}$.
\end{proof}
\end{claim}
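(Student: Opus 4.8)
The plan is to induct on $k$ following the recursive definition of $X_k$. Before starting the induction I would pin down exactly how the two sides of each inner copy $X^1_{k-1}$ and $X^2_{k-1}$ embed into the two sides of $X_k$. The root $v$ of $X_k$ lies on the root side of $X_k$ and is joined to the root side of $X^1_{k-1}$ and to the non-root side of $X^2_{k-1}$; since $X_k$ is bipartite, neighbours of $v$ lie on the non-root side. Hence in $X^2_{k-1}$ the sides are \emph{preserved} (root side of $X^2_{k-1}$ in the root side of $X_k$, non-root side in the non-root side), whereas in $X^1_{k-1}$ the sides are \emph{flipped} (root side of $X^1_{k-1}$ in the non-root side of $X_k$, non-root side in the root side). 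This flip is the engine of the argument.

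For the base case $k=2$, the graph $X_2$ is a single edge and each $Y_1$ is a single vertex, whose non-root side is empty. Taking $Y_1$ to be the root of $X_2$ when $\alpha_1=1$, and the other vertex when $\alpha_1=0$, makes the required side condition hold --- genuinely in the first case and vacuously in the second.

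For the inductive step $k\geq 3$ I would branch on the last bit $\alpha_{k-1}$. If $\alpha_{k-1}=0$, set $Y_{k-1}:=X^1_{k-1}$; by the flip its non-root side lands in the root side of $X_k$, which is what a $0$-bit requires, and I then apply induction to $X^2_{k-1}$ with the unchanged prefix $\alpha_1,\ldots,\alpha_{k-2}$, so that the side conditions obtained inside $X^2_{k-1}$ transfer verbatim to $X_k$ because its sides are preserved. If $\alpha_{k-1}=1$, set $Y_{k-1}:=X^2_{k-1}$ (sides preserved, so its root side sits in the root side of $X_k$, as a $1$-bit wants), and apply induction to the flipped copy $X^1_{k-1}$ with the \emph{complemented} prefix $\overline{\alpha_1},\ldots,\overline{\alpha_{k-2}}$. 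The complementation is exactly what cancels the flip: a condition proved against the root side of $X^1_{k-1}$ is carried by the flip into the non-root side of $X_k$, turning a statement about one side of $Y_i$ for the complemented bit into the matching statement about the opposite side of $Y_i$ for the original bit.

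The remaining requirements are immediate. In each branch $Y_{k-1}$ is one of $X^1_{k-1},X^2_{k-1}$ while the inductively produced $Y_1,\ldots,Y_{k-2}$ live inside the other, and there are no edges between the two copies by definition; the root $v$ lies in no $Y_i$ and so cannot create edges between copies. I do not expect real difficulty here --- the one genuine pitfall is the side bookkeeping, in particular remembering to complement the prefix precisely in the $\alpha_{k-1}=1$ branch and not in the other; pairing the flip with the wrong branch, or forgetting to complement, is the trap to guard against.
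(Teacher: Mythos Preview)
Your proposal is correct and follows essentially the same approach as the paper's proof: induct on $k$, use one of the two inner copies as $Y_{k-1}$ according to the value of $\alpha_{k-1}$, and apply the inductive hypothesis in the other copy with the prefix unchanged (in the preserved copy $X^2_{k-1}$) or complemented (in the flipped copy $X^1_{k-1}$). Your explicit analysis of which copy is flipped and which is preserved, and your remark that the root $v$ is not in any $Y_i$ so cannot create cross-edges, are exactly the points the paper leaves implicit.
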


\begin{claim}\label{claim:forcing-structure}
 If $G$ contains $X_k$ as an induced subgraph, then $\chi_*(G)\geq k$.
\end{claim}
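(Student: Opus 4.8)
The plan is to reduce the statement to the graphs $X_k$ themselves and then run an adaptive adversary argument by induction on $k$. Since $\chi_*$ is monotone under induced subgraphs---the adversary may reveal the vertices of an induced copy of $X_k$ first and the remaining vertices of $G$ afterwards, so that on this prefix any algorithm behaves exactly as it would on $X_k$ presented alone---it suffices to prove that $\chi_*(X_k)\ge k$, i.e.\ that against every on-line algorithm the adversary has a presentation order of $X_k$ forcing at least $k$ colors. For a fixed deterministic algorithm an adaptive strategy can be realised by a single presentation order (the adversary can precompute the responses), so I may treat the adversary as \emph{adaptive}, choosing the next vertex to reveal in view of the colors already assigned; this is what makes the recursion manageable.

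First I would isolate the correct inductive invariant. The naive statement ``the adversary can force the root of $X_k$ to avoid a prescribed set of $k-1$ colors'' is already false for $X_2$: a single edge lets the algorithm color the non-root endpoint freely, so no particular color can be forbidden on the root. Hence the invariant must speak about the \emph{number} of distinct colors produced, and the inductive step must manufacture color increments out of the recursive structure. Here Claim~\ref{claim:technical} is the crucial tool: for a binary sequence $\alpha_1,\dots,\alpha_{k-1}$ of the adversary's choosing it exhibits pairwise non-adjacent induced copies $Y_1,\dots,Y_{k-1}$ of $X_1,\dots,X_{k-1}$ inside $X_k$ whose selected sides all lie on the root side of $X_k$; consequently their complementary sides lie on the non-root side, and since the root of $X_k$ is universal to the non-root side, a chosen vertex of each $Y_i$ can be made adjacent to the root. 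The strategy is to play the copies one after another---using the inductive hypothesis on each $Y_i$ to force $i$ colors on it---and then to reveal the root of $X_k$ last, so that if the copies have collectively exhibited $k-1$ distinct colors among the root's neighbours, the root is forced into a $k$-th color.

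The main obstacle is precisely the step that forces the $k-1$ copies to display \emph{distinct} colors on the root's neighbourhood. The copies $Y_i$ are mutually non-adjacent, so a naive presentation lets the algorithm economise, reusing a single color across all of them and defeating the count; this is already visible for $X_3\cong P_5$, where an algorithm may color an independent triple monochromatically. The way around it, which I expect to be the technical heart of the argument, is to exploit the two degrees of freedom that Claim~\ref{claim:technical} grants the adaptive adversary: the bit $\alpha_i$ selects which side of $Y_i$ faces the root, and the adversary may defer the choice of the $\alpha_i$ and of the order in which the $Y_i$ are played until after seeing how the algorithm commits on the already-revealed configuration. Branching on these commitments---exactly as in the $X_3$ analysis, where each possible response of the algorithm to a configuration of isolated vertices is punishable by a different embedding---one shows that in every branch some copy can be forced to contribute a fresh color adjacent to the root (the fresh color possibly surfacing on a sub-root rather than on the global root, which is why the invariant should be phrased flexibly enough to permit this). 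Granting such a flexible invariant, the induction closes and yields $\chi_*(X_k)\ge k$, hence $\chi_*(G)\ge k$ whenever $G$ contains $X_k$ as an induced subgraph.
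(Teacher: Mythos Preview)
Your overall plan matches the paper's: induction on $k$, present pairwise non-adjacent copies $Y_1,\dots,Y_{k-1}$ of $X_1,\dots,X_{k-1}$, force $i$ colours on each $Y_i$ via the inductive hypothesis, and reveal the root of $X_k$ last. But you have invented an obstacle that is not there and, in the process, failed to name the step that the adaptive freedom actually resolves.

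Obtaining $k-1$ \emph{distinct} colours is immediate once you have forced $i$ colours on $Y_i$ (which you already state): having selected $c_1,\dots,c_{i-1}$, the copy $Y_i$ carries $i$ colours, so at least one of them, call it $c_i$, is new; pick any $v_i\in Y_i$ with colour $c_i$. No branching, no re-ordering of the $Y_i$, and no ``flexible invariant'' are needed---the invariant is simply $\chi_*(X_k)\ge k$. Your $X_3$ worry about ``an independent triple coloured monochromatically'' does not arise: the strategy presents one isolated vertex and then one \emph{edge}, and the edge already forces two colours on $Y_2$.

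The genuine purpose of the adaptive bits $\alpha_i$ from Claim~\ref{claim:technical}, which you mention but misattribute, is not to manufacture distinct colours but to \emph{place} each $v_i$ in the root's neighbourhood. After the algorithm has coloured $Y_i$, the adversary sees on which side of $Y_i$ the vertex $v_i$ lies and chooses $\alpha_i$ so that this side lands on the non-root side of $X_k$; since the root of $X_k$ is universal to its non-root side, it is then adjacent to every $v_i$ and must avoid $c_1,\dots,c_{k-1}$. That is the whole argument---your third paragraph can be replaced by these two sentences.
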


 \begin{proof}
  Let $A$ be any on-line coloring algorithm.
  We prove by induction on $k$ that the adversary can present the vertices of $G$ such that $A$ uses at least $k$ colors.
  It is clear that any coloring algorithm has to use one color for $X_1$ and two colors for $X_2$.
  So suppose that $k\geq3$.
  Adversary starts with presenting disjoint copies $Y_1,\ldots,Y_{k-1}$ of $X_1,\ldots,X_{k-1}$ one after another, with no edges between the copies, and by induction he can do this in such a way that $A$ uses $i$ colors on $Y_i$, for $i\in\set{1,\ldots,k-1}$.
  Therefore there are distinct colors $c_1,\ldots,c_{k-1}$ such that $c_i$ is used on $Y_i$ for every $i$
  and let $v_i\in Y_i$ be a vertex colored with $c_i$.
  Then we set $\alpha_i=1$ if $v_i$ is on the non-root side of $Y_i$ and $\alpha_i=0$ otherwise.
  
  Now we explain how to embed $Y_1,\ldots,Y_{k-1}$ into $G$.
  Let $v$ be the root of the induced copy of $X_k$ contained in $G$.
  By Claim \ref{claim:technical} there are pairwise disjoint induced copies of $X_1\ldots,X_{k-1}$ in $X_k$ with no edges between the copies, such that for all $i\in\{1,\ldots,k-1\}$ the root of $X_i$ is on the same side as $v$ if and only if $\alpha_i=1$. 
  Adversary identifies those copies with $Y_1,\ldots,Y_{k-1}$.
  By the choice of $\alpha_i$ it follows that $v_i$ is on the non-root side of $X_k$, for all $i\in\set{1,\ldots,k-1}$.
  Since $v$ is universal in $X_k$, it is adjacent to $v_i$ for all $i$.

  After presenting all $Y_1,\ldots,Y_{k-1}$ the adversary introduces vertex $v$, being the root of $X_k$ in $G$ and forces $A$ to use a color different from $c_1,\ldots,c_{k-1}$.
 \end{proof}

\section{The proof}

We present a new on-line algorithm for bipartite graphs, Algorithm~\ref{algo:bicolormax+}, and 
we prove that this algorithm is on-line competitive for $P_9$-free bipartite graphs.

Algorithm \ref{algo:bicolormax+} uses three disjoint pallettes of colors, $\set{a_n}_{n\geq1}$, $\set{b_n}_{n\geq1}$ and $\set{c_n}_{n\geq1}$.
In the following whenever the algorithm fixes a color of a vertex $v$ we are going to refer to it by $\color(v)$.
Also for any set of vertices $X$ we denote $\color(X)=\set{\color(x)\mid x\in X}$.
We say that $v$ has \emph{color index} $i$ if $\color(v)\in\set{a_i,b_i}$.

Suppose an adversary presents a new vertex $v$ of a bipartite graph $G$.
Then let $G_i[v]$ be the subgraph spanned by the vertices presented so far and colored with a color from $\set{a_1,\ldots,a_i,b_1,\ldots,b_i,c_1,\ldots,c_i}$ and vertex $v$, which is uncolored yet.
With $C_i[v]$ we denote the connected component of $G_i[v]$ containing $v$.
For convenience put $C_0[v]=\set{v}$.
Furthermore, let $C_i(v)$ be the graph $C_i[v]$ without vertex $v$.
For a vertex $x$ in $C_i(v)$ it will be convenient to denote by $C_i^x(v)$ the connected component of $C_i(v)$ that contains $x$.
We say that a color $c$ is \emph{mixed} in a connected subgraph $C$ of $G$ if $c$ is used on vertices on both sides of $C$.

\begin{algorithm-hbox}[ht]
    \caption{On-line competitive for $P_9$-free bipartite graphs} 
    \label{algo:bicolormax+}
    an adversary introduces a new vertex $v$\;
    $m \gets \max \set{i\geq1 \mid \textup{$a_i$ is mixed in $C_i[v]$}}+1$\tcp*[r]{$\max\set{}:=0$}
    \textup{let $I_1,I_2$ be the sides of $C_{m}[v]$ such that $v\in I_1$}\;
    \lIf{$a_{m}\in \color(I_2)$} {$\color(v)=b_{m}$}\;
    \lElse{\lIf{$c_{m}\in \color(I_2)$} {$\color(v)=a_{m}$}}\;
    \lElse{\lIf{$\exists u\in I_1\cup I_2$ and $\exists u'\in I_2$ such that $u$ has color index $j\geq m-\sqrt{2m}+2$ and $u'$ is universal to $C_{j-1}[u]$} {$\color(v)=c_{m}$}\;
    \lElse{$\color(v)=a_{m}$}}
\end{algorithm-hbox}

\begin{claim}\label{claim:proper-coloring-p9}
Algorithm \ref{algo:bicolormax+} gives a proper coloring of on-line bipartite graphs.
\end{claim}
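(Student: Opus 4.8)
The plan is to verify properness locally: I would show that at the moment the algorithm fixes $\color(v)$ for a freshly presented vertex $v$, this color differs from the color of every already-colored neighbour of $v$. Write $m$ for the index computed in the first line, so that $\color(v)\in\set{a_m,b_m,c_m}$ and, by the very definition of $m$, the color $a_m$ is \emph{not} mixed in $C_m[v]$. Suppose $u$ is an already-colored neighbour of $v$ with $\color(u)=\color(v)$. Since the three palettes are disjoint and colors of different indices are distinct, $\color(u)\in\set{a_m,b_m,c_m}$, so $u$ is a vertex of $G_m[v]$; being adjacent to $v$ it lies in $C_m[v]$, and as $C_m[v]$ is bipartite with $v\in I_1$ we get $u\in I_2$. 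Thus in each of the three possible outcomes it suffices to rule out that the value $\color(v)$ occurs on $I_2$ among the neighbours of $v$.

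The two outcomes $\color(v)=a_m$ and $\color(v)=c_m$ are immediate from the case distinction. The value $a_m$ is assigned only in the two branches that are reached after the test $a_m\in\color(I_2)$ has failed, so $a_m\notin\color(I_2)$ and no neighbour in $I_2$ carries $a_m$. Likewise $c_m$ is assigned only in the branch reached after both $a_m\in\color(I_2)$ and $c_m\in\color(I_2)$ have failed, so $c_m\notin\color(I_2)$. In either situation $\color(u)\ne\color(v)$, contradicting the choice of $u$.

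The remaining outcome $\color(v)=b_m$ is the crux, and here I would argue by tracing back to the moment $u$ received its color. The key observation is that the algorithm produces a $b$-color only in the branch guarded by $a_m\in\color(I_2)$; hence when $u$ was presented, the index it computed was exactly $m$, and there was a vertex $w$ with $\color(w)=a_m$ on the side of $C_m[u]$ (taken at that earlier moment) opposite to $u$. Because colors are never changed and connectivity through vertices of index at most $m$ only grows over time, $w$ still carries $a_m$ and lies in $C_m[v]$; moreover, the bipartition of a growing connected component is consistent over time, so $w$ still sits on the side of $u$ opposite to $u$. As $u\in I_2$, this forces $w\in I_1$. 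But the branch producing $\color(v)=b_m$ also guarantees $a_m\in\color(I_2)$, so $a_m$ now appears on both $I_1$ and $I_2$, i.e.\ $a_m$ is mixed in $C_m[v]$. This contradicts the definition of $m$, ruling out such a neighbour $u$.

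I expect the $b_m$ case to be the only genuine obstacle: the other two colors are handled simply by reading off the guards of the conditional, whereas $b_m$ forces one to reason about the history of the coloring. The two facts that make the trace-back rigorous — that a vertex retains both its color and its side within an ever-growing connected component, and that $b$-colors are produced exclusively under the guard $a_m\in\color(I_2)$ — are exactly what converts the hypothetical monochromatic edge into a violation of the non-mixedness of $a_m$ that defines $m$.
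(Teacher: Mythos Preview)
Your proof is correct and follows essentially the same approach as the paper's own argument: the cases $\color(v)=a_m$ and $\color(v)=c_m$ are read off from the failed guards, and for $\color(v)=b_m$ one traces back to the moment the conflicting neighbour received $b_m$, finds an $a_m$-witness on the opposite side, and contradicts the non-mixedness of $a_m$ that defines $m$. Your presentation is slightly more explicit than the paper's in justifying why a conflicting neighbour must lie in $I_2$ and why the earlier $a_m$-witness persists into $C_m[v]$ on the correct side, but the logical structure is identical.
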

\begin{proof}
  Suppose an adversary introduces a vertex $v$ of a bipartite graph $G$.
  We have to show that Algorithm~\ref{algo:bicolormax+} colors $v$ properly, i.e., no vertex presented before $v$ and adjacent to $v$ has the same color as $v$.
  Let $k\geq 1$ be the color index of $v$ and $(I_1,I_2)$ be the bipartition of $C_{k}[v]$ such that $v\in I_1$.
  If $\color(v)=a_k$, then there is no vertex in $I_2$ colored with $a_k$ because of the first if-condition.
  In particular, no neighbor of $v$ is colored with $a_k$.
  
  If $v$ is colored with $b_k$, then there is a vertex $u\in I_2$ with $\color(u)=a_k$.
  Suppose $v$ is not colored properly, which means that there is a vertex $w\in I_2$ with color $b_k$.
  When $w$ was introduced, there must have been a vertex $u'$ on the other side of $w$ in $C_{k}[w]$ with $\color(w)=a_k$.
  Since $C_k[w]\subseteq C_k[v]$ it follows that $u'\in I_1$.
  But then $u$ and $u'$ certify that color $a_k$ is mixed in $C_k[v]$, which contradicts the fact that the color index of $v$ is $k$.
  
  We are left with the case that $v$ is colored with $c_k$.
  Because of the second if-condition in the algorithm, a vertex can only get color $c_k$ if there is no vertex in $I_2$ colored in $c_k$, so in particular no neighbor of $v$ is colored with $c_k$.
 \end{proof}

The following claim, captures an idea behind maintaining the first two pallettes of colors (the $a_i$'s and $b_i$'s).
Namely, to force a single $a_i$-color to be mixed we need to introduce a vertex merging two components.
This idea is already present in previous works~\cite{BCP06,BCP08}.

\begin{claim}\label{claim:disjoint-components-p8}
Suppose an adversary presents a bipartite graph $G$ to Algorithm~\ref{algo:bicolormax+}.
Let $v\in V(G)$ and let $x,y$ be two vertices from opposite sides of $C_{i}[v]$ both colored with $a_i$ with $i\geq1$. 
Then $x$ and $y$ lie in different connected components of $C_{i}(v)$.
\end{claim}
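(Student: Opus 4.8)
The plan is to argue by contradiction. Suppose \(x\) and \(y\) lie in the same connected component of \(C_i(v)\). Since every vertex of \(C_i[v]\) other than \(v\) carries a color of index at most \(i\), the whole of \(C_i(v)\) lives inside \(G_i\) (the subgraph on all vertices of color index at most \(i\)) and avoids \(v\). Hence there would be a path from \(x\) to \(y\) inside \(G_i\) that does not use \(v\); that is, \(x\) and \(y\) would be joined in \(G_i-v\). I would then show that this is impossible, which establishes the claim in contrapositive form.

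To do so, I would track how \(x\) and \(y\) become connected as the adversary reveals vertices one by one. Vertices and edges are only ever added, so ``\(x\) and \(y\) lie in the same component of \(G_i-v\)'' is a monotone property, and there is a \emph{first} moment at which it becomes true. This happens exactly when some vertex \(w\) of color index at most \(i\), with \(w\neq v\) (forced, since \(w\in G_i-v\)), is presented and turns out to be adjacent to both of the previously separate components of \(x\) and of \(y\). At that moment both \(x\) and \(y\) are already present and colored \(a_i\), and both lie in \(C_i[w]\), the component of \(w\) in \(G_i\). The crucial point is that \(x\) and \(y\) sit on \emph{opposite} sides of \(C_i[w]\): because the components \(C_j[\cdot]\) only grow in the presentation order, \(C_i[w]\) is a connected induced subgraph of the later \(C_i[v]\), so its bipartition is the restriction of that of \(C_i[v]\), on whose opposite sides \(x\) and \(y\) lie by hypothesis.

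Consequently \(a_i\) is \emph{mixed} in \(C_i[w]\) at the time \(w\) is colored, since \(a_i\) occurs on both sides (on \(x\) and on \(y\)). By the very definition of the color index computed for \(w\), this forces \(m=\max\set{j\ge 1\mid a_j\text{ is mixed in }C_j[w]}+1\ge i+1\), so \(w\) receives a color of index at least \(i+1\), contradicting \(w\in G_i\). The collision with the definition of the color index is then immediate; I expect the only delicate part to be the bookkeeping that makes it legitimate, namely correctly identifying the merging vertex \(w\) at the first connecting step and verifying that the sides of \(x\) and \(y\) are preserved when passing from \(C_i[w]\) at the time of \(w\) to the later \(C_i[v]\). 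Both of these rest on the monotonicity of the components in the presentation order, which is the one structural fact I would state carefully before running the argument.
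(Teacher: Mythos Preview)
Your approach is essentially the paper's: watch the graph $G_i$ grow and argue that any vertex $w$ merging the components of $x$ and $y$ must see $a_i$ mixed in $C_i[w]$, hence receive color index at least $i+1$, contradicting $w\in G_i$.

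There is, however, one small gap. You assert that at the merging moment ``both $x$ and $y$ are already present and colored $a_i$'', but this is exactly what requires a separate check: nothing you have said rules out that the first merging vertex $w$ is the later of $x,y$ itself. Say $y$ is introduced after $x$. If $x\in C_i[y]$ already when $y$ appears, then $w=y$, and your mixedness argument does not apply, because in $C_i[y]$ the vertex $y$ is still uncolored; the presence of $x$ alone does not make $a_i$ mixed. The paper treats this base case explicitly: if $x\in C_i[y]$, then $x$ lies on the side $I_2$ opposite $y$ (sides are inherited from the global bipartition, as you correctly note), so $a_i\in\color(I_2)$ and the first if-branch of the algorithm fires, coloring $y$ with $b_i$ rather than $a_i$ --- a contradiction. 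Once this shows $w\notin\{x,y\}$, your inductive step goes through verbatim.
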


\begin{proof}
Let $v$, $x$ and $y$ be like in the statement of the claim.
We are going to prove that at any moment after the introduction of $x$ and $y$, $x$ and $y$ lie in different connected components of the subgraph spanned by vertices colored with $a_1,b_1,c_1\ldots,a_i,b_i,c_i$.

Say $x$ is presented before $y$.
First note that $x\not\in C_i[y]$ as otherwise $x$ had to be on the opposite side to $y$ (because it is on the opposite side at the time $v$ is presented) and therefore $y$ would receive color $b_i$. 
Now consider any vertex $w$ presented after $y$ and suppose the statement is true before $w$ is introduced.
If $x\not\in C_i[w]$ or $y\not\in C_i[w]$ then whatever color is used for $w$ this vertex does not merge the components of $x$ and $y$ in the subgraph spanned by vertices presented so far and colored with $a_1,b_1,c_1\ldots,a_i,b_i,c_i$.
Otherwise $x,y\in C_i[w]$.
This means that color $a_{i}$ is mixed in $C_i[w]$ and therefore $w$ receives a color with an index at least $i+1$. 
Thus, the subgraph spanned by the vertices of the colors $a_1,b_1,c_1\ldots,a_i,b_i,c_i$ stays the same and $x$ and $y$ remain in different connected components of this graph.

Since all vertices in $C_i(v)$ are colored with $a_1,b_1,c_1\ldots,a_i,b_i,c_i$, we conclude that $x$ and $y$ lie in different components of $C_i(v)$.
\end{proof}

Consider a vertex $v$ with a color index $k\geq2$.
Let $v_1,v_2$ be the earliest introduced vertices from the opposite sides of $C_{k-1}(v)$, colored with $a_{k-1}$ (so witnessing that $a_{k-1}$ is mixed).
We call $v_1$ and $v_2$ the \emph{children} of $v$.
By Claim~\ref{claim:disjoint-components-p8} it follows that $C_{k-1}^{v_1}(v)$ and $C_{k-1}^{v_2}(v)$ are distinct (so disjoint and no edge is between them).

Suppose that an adversary presents a graph $G$ which is $P_9$-free.
Consider $v\in V(G)$ with color index $k\geq3$ and $v_1,v_2$ children of $v$.
Note that, at least one of $C_{k-1}^{v_1}[v]$ and $C_{k-1}^{v_2}[v]$ does not contain an induced $P_5$ with one endpoint in $v$.
Indeed, the join of two such paths at $v$ would end up in an induced $P_9$, which is forbidden in $G$.
Choose arbitrarily a component $C_{k-1}^{v_i}[v]$ with no induced $P_5$ ending at $v$ and let $v_{i,1},v_{i,2}$ be the children of $v_i$.
We call $v_{i,1}$ and $v_{i,2}$ the \emph{grandchildren} of $v$.

The next claim describes a property of a component containing grandchildren of a given vertex, namely, under certain condition we win a universal vertex to a subcomponent.
The usage of the third pallette of colors, the $c_i$'s, is inspired by this property.

 \begin{claim}\label{claim:path-vs-universal}  
  Suppose an adversary presents a $P_9$-free bipartite graph $G$ to Algorithm~\ref{algo:bicolormax+}.
  Let $x$ be a vertex with color index $i\geq 2$.
  Suppose that vertex $y\in C_{i-1}(x)$, with color index $j$, lies on the other side of $x$ in $G$ and $y$ is not adjacent to $x$.
  If there is no induced path of length 5 in $C^y_{i-1}[x]$ with one endpoint in $x$, then $x$ has a neighbor in $C^y_{i-1}(x)$ that is universal to $C_{j-1}[y]$.  
 \end{claim}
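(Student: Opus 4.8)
The plan is to first convert the hypothesis on induced paths into a distance layering around $x$, and then to locate a single neighbor of $x$ that dominates the relevant side of $C_{j-1}[y]$ by repeatedly manufacturing induced $P_5$'s ending at $x$.

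First I would record the layering. Since shortest paths are induced, the absence of an induced $P_5$ (on five vertices) with endpoint $x$ in $C^y_{i-1}[x]$ means that no vertex of $C^y_{i-1}[x]$ is at distance $4$ from $x$; hence every vertex lies in one of the layers $N_1,N_2,N_3$ (distances $1,2,3$ from $x$). As $G$ is bipartite, $N_1,N_3$ lie on $y$'s side and $N_2$ on $x$'s side, and since $y$ is on the opposite side of $x$ and non-adjacent to $x$ we get $y\in N_3$. Next I would check $C_{j-1}[y]\subseteq C^y_{i-1}(x)$: every vertex of $C_{j-1}[y]$ is colored with index $\le j-1<i$ and is connected to $y$ avoiding $x$, so it sits in the component of $y$ in $C_{i-1}(x)$. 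Consequently $W:=C_{j-1}[y]\cap(\text{$x$'s side})\subseteq N_2$, and for a neighbor $z$ of $x$ (which lies in $N_1$) the property of being universal to $C_{j-1}[y]$ is exactly $W\subseteq N(z)$. Thus the goal reduces to producing one $z\in N_1$ adjacent to all of $W$.

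The core observation I would isolate is a domination lemma for the neighbors of $y$. Let $M$ be the set of neighbors of $y$; all of them lie in $N_2$ (since $y\in N_3$ and there is no $N_4$). If $z\in N_1$ is adjacent to some $w'\in M$ and $w''\in M$ is any other neighbor of $y$, then $x-z-w'-y-w''$ is an induced $P_5$ ending at $x$ \emph{unless} $z\sim w''$: indeed the only possible chords are killed by bipartiteness ($z\not\sim y$, $w'\not\sim w''$) and by the layering ($x\not\sim w',w''$ as they lie in $N_2$, and $x\not\sim y$ as $y\in N_3$). Hence any neighbor of $x$ that hits one neighbor of $y$ hits all of them. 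Since $y\in N_3$, a shortest path $x-z_0-w_0-y$ furnishes a $z_0\in N_1$ adjacent to $w_0\in M$, so this chosen $z_0$ already dominates $M=N(y)$.

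It remains to push $z_0$ from $M$ out to all of $W$, which I would attempt by induction on the distance within $C_{j-1}[y]$ from $y$. For $w^\ast\in W$ at distance $\ge 3$, take the first two steps $w^\ast-c-w'$ of a shortest $C_{j-1}[y]$-path to $y$, so that $w'\in W$ is closer to $y$ and already dominated by $z_0$; if the connector $c$ (on $y$'s side) lies in $N_3$, then $x-z_0-w'-c-w^\ast$ is again an induced $P_5$ unless $z_0\sim w^\ast$, which is the inductive step. \textbf{The main obstacle is precisely the case where such a connector $c$ lies in $N_1$}, i.e.\ is itself a neighbor of $x$: there the $P_5$ collapses under the chord $xc$, and in fact no argument using only $P_5$-freeness can succeed, since a chain $w_1-c_{12}-w_2-c_{23}-w_3$ with the $c$'s adjacent to $x$ realizes all of $W$ with no single dominating neighbor and no induced $P_5$ from $x$. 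To exclude this I would invoke the coloring: since $y$ has color index $j$, color $a_{j-1}$ is mixed in $C_{j-1}[y]$, and by Claim~\ref{claim:disjoint-components-p8} the two witnessing children of $y$ lie in different components of $C_{j-1}(y)$, so $y$ is a cut vertex of $C_{j-1}[y]$; applying the same reasoning recursively to the children exposes the $X_j$-like structure. Weaving this cut-vertex/recursive structure into the induction --- forcing the obstructing connectors to sit at distance $3$, on the far side of a cut vertex, rather than in $N_1$, most plausibly via an induction on $j$ --- is the step I expect to carry the real difficulty.
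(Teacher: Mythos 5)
Your setup is sound and most of your pieces check out: the distance layering $N_1,N_2,N_3$ of $C^y_{i-1}[x]$, the inclusion $C_{j-1}[y]\subseteq C^y_{i-1}(x)$, the reduction to dominating $W=C_{j-1}[y]\cap(\text{$x$'s side})\subseteq N_2$, the domination lemma for $N(y)$, and the propagation step when the connector $c$ lies in $N_3$ are all correct. You also correctly diagnose the crux: pure ``no induced $P_5$ from $x$'' cannot handle a connector $c\in N_1$ (i.e.\ a vertex of $C_{j-1}[y]$ adjacent to $x$), and the coloring, via Claim~\ref{claim:disjoint-components-p8}, must enter at exactly that point. But there the proof stops: what you offer (``forcing the obstructing connectors to sit at distance $3$, on the far side of a cut vertex, most plausibly via an induction on $j$'') is a plan, not an argument, and it is not the step that actually works. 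This is a genuine gap, and it sits precisely at the part of the claim that carries the difficulty.

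The missing case has a short, direct resolution, which is the heart of the paper's proof: the bad case never occurs, because under the hypotheses $x$ has \emph{no} neighbor in $C_{j-1}[y]$ at all. Suppose $z\in C_{j-1}[y]$ were adjacent to $x$ (so $z\neq y$, as $x\not\sim y$). Take an induced $x$--$y$ path $P$ whose inner vertices all lie in $C^z_{j-1}(y)$, the component of $C_{j-1}(y)$ containing $z$; since $x$ and $y$ are on opposite sides and non-adjacent, $P$ has an even number of vertices, at least $4$. Because $y$ has color index $j\geq2$, the color $a_{j-1}$ is mixed in $C_{j-1}[y]$, so by Claim~\ref{claim:disjoint-components-p8} the graph $C_{j-1}(y)$ has a second component $C'$, and $y$ has a neighbor $z'$ in $C'$. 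Append $z'$ to $P$ at $y$: $z'$ is adjacent to no inner vertex of $P$ (there are no edges between distinct components of $C_{j-1}(y)$) and not to $x$ (both lie on $x$'s side), so the extended path is induced with at least $5$ vertices, and its first five vertices form an induced $P_5$ ending at $x$ --- a contradiction. Note how this uses the disconnectedness of $C_{j-1}(y)$ in the opposite way from your sketch: not to relocate the connectors, but to extend a path \emph{past} $y$ into the other component. Once this is in place, every connector in your induction lies in $N_3$ and your propagation finishes; alternatively (as the paper does) one can then skip the induction entirely and show by a minimal-chord argument along a shortest path inside $C_{j-1}[y]$ that the second vertex $r$ of a shortest $x$--$y$ path is itself universal to $C_{j-1}[y]$.
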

 \begin{proof}
We can assume that $y$ has color index $j\geq2$, as otherwise $C_{j-1}[y]=C_0[y]=\set{y}$ and vacuously any neighbor of $x$ is universal to $C_{j-1}[y]$ (as the side it should be adjacent to is empty).

First, let us consider the case that $x$ has a neighbor $z$ in $C_{j-1}[y]\subseteq C_{i-1}^y(x)$ (see Figure~\ref{fig:one-side-universal} for this case).
Since $x$ and $y$ are not adjacent we have $y\neq z$.
As $C^z_{j-1}[y]$ is connected, there is an induced path $P$ connecting $x$ and $y$ that has only vertices of $C^z_{j-1}(y)$ as inner vertices.
Clearly, $P$ has even length at least $4$.
Now the color index of $y$, namely $j\geq2$, assures the existence of a mixed pair coloered with $a_{j-1}$ in $C_{j-1}[y]$ and with Claim~\ref{claim:disjoint-components-p8} it follows that $C_{j-1}(y)$ has at least two connected components.
In particular, there is a component $C'$ of $C_{j-1}(y)$ other than $C^z_{j-1}(y)$.
Clearly, $y$ has a neighbor $z'$ in $C'$, which we use to prolong $P$ at $y$.
Since there is no edge between $C^z_{j-1}(y)$ and $C'$, vertex $z'$ is not adjacent to the inner vertices of $P$.
And as $G$ is bipartite $z'$ cannot be adjacent to $x$.
We conclude the existence of an induced path of length $5$ in $C^y_{i-1}[x]$ with $x$ and $z'$ being its endpoints, a contradiction.
   
 \begin{figure}[t]
 \centering
 \includegraphics{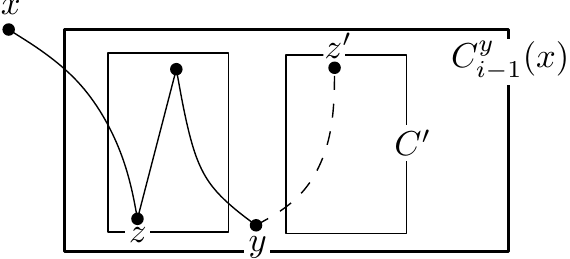}
 \caption{Claim~\ref{claim:path-vs-universal}: Situation in which $x$ has a neighbor $z$ in $C_{j-1}[y]$.}
 \label{fig:one-side-universal}
\end{figure}
   
   Second, we consider the case that $x$ has no neighbor in $C_{j-1}[y]$.
   By our assumptions a shortest path connecting $x$ and $y$ in $C_{i-1}^y[x]$ must have length exactly $4$.
   Let $P=(x,r,s,y)$ be such a path.   
   We claim that vertex $r$ is universal to $C_{j-1}[y]$.
   Suppose to contrary that there is a vertex $s'$ in $C_{j-1}[y]$ which is on the other side of $y$ and which is not adjacent to $r$.
   Let $Q=(y,s_1,r_1,\ldots,s_{\ell-1},r_{\ell-1},s_{\ell}=s')$ be a shortest path connecting $y$ and $s'$ in $C_{j-1}[y]$.
   For convenience put $s_0=s$.
   Now we choose the minimal $m\geq 0$ such that $r$ is adjacent to $s_m$ but not to $s_{m+1}$.
   Such an $m$ exists since $r$ is adjacent to $s_0=s$ but not to $s_{\ell}$.
   If $m=0$ then the path $(x,r,s,y,s_1)$ is an induced path of length $5$ and if $m>0$ then the path $(x,r,s_m,r_m,s_{m+1})$ has length $5$ and is induced unless $x$ and $r_m$ are adjacent.
   But the latter is not possible since $x$ has no neighbor in $C_{j-1}[y]$.
   Thus, in both cases we get a contradiction and we conclude that $r$ is universal to $C_{j-1}[y]$.
  \end{proof}

In the following we write $v\rightarrow_i w$ for $v,w\in V(G)$, if there is a sequence $v=x_1,\ldots,x_j=w$ with $j\leq i$ and $x_{\ell+1}$ is a grandchild of $x_{\ell}$, for all $\ell\in\set{1,\ldots,j-1}$. 
Moreover, we define $S_i(v)=\set{w\mid v\rightarrow_i w}$. 
Thus, $S_1(v)=\set{v}$ for every $v\in V(G)$.

We make some immediate observations concerning this definition.
Let $v\in V(G)$ be a vertex with color index $k\geq 3$.
Then each $w\in S_i(v)$ has color index at least $k-2i+2$ and the component $C_{k-1}^w(v)$ does not contain an induced $P_5$ with one endpoint in $v$.
Furthermore, each vertex in $S_i(v)$ is connected to $v$ by a path in $G$ and all vertices in the path, except $v$, have color index at most $k-1$.
This proves that $S_i(v) \subset C_{k-1}[v]$, for all $i\geq1$.
Note also that if $v_1$ and $v_2$ are the grandchildren of $v$ then we have $S_i(v)=\set{v}\cup S_{i-1}(v_1) \cup S_{i-1}(v_2)$.
By definition $v_1$ and $v_2$ are the children of a child $v'$ of $v$.
It follows that $S_{i-1}(v_1)\subset C_{k-2}^{v_1}(v')$ and $S_{i-1}(v_2)\subset C_{k-2}^{v_2}(v')$.
By Claim~\ref{claim:disjoint-components-p8}, we get that $C_{k-2}^{v_1}(v')$ and $C_{k-2}^{v_2}(v')$ are distinct.
In particular, $S_{i-1}(v_1)$ and $S_{i-1}(v_2)$ are disjoint and there is no edge between them.

For a vertex $v\in V(G)$, $S_{i}(v)$ is \emph{complete} in $G$ if for every $u,w\in S_i(v)$ such that $u\rightarrow_i w$ and $u,w$ lying on opposite sides of $G$, we have $u$ and $w$ being adjacent in $G$.
Note that $v$ is a universal vertex in $S_i(v)$, provided $S_i(v)$ is complete.

 \begin{claim}\label{claim:full-tree}
Suppose an adversary presents a bipartite graph $G$ to Algorithm~\ref{algo:bicolormax+}.
Let $v\in V(G)$ be a vertex with color index $k$ and let $k\geq 2i \geq2$.
If $S_{i}(v)$ is complete then $S_{i}(v)$ contains an induced copy of $X_{i}$ in $G$ with $v$ being the root of the copy.
\end{claim}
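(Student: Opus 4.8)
The plan is to induct on $i$, using the recursive identity $S_i(v)=\set{v}\cup S_{i-1}(v_1)\cup S_{i-1}(v_2)$, where $v_1,v_2$ are the grandchildren of $v$, to mirror the recursive construction of $X_i$ from two copies of $X_{i-1}$ and a root. I will handle $i=1$ and $i=2$ as base cases and treat $i\ge 3$ in the inductive step.

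For $i=1$ we have $S_1(v)=\set{v}$, which is already an induced $X_1$ rooted at $v$, and completeness is vacuous. For $i=2$ we have $S_2(v)=\set{v,v_1,v_2}$. Since $v_1$ and $v_2$ are the two children of a common child $v'$ of $v$, they lie on opposite sides of $G$, so exactly one of them, say $v_j$, lies on the side of $G$ opposite to $v$. Completeness of $S_2(v)$, applied to the witness $v\rightarrow_2 v_j$, gives that $v$ and $v_j$ are adjacent, so $\set{v,v_j}$ induces a copy of $X_2$ rooted at $v$.

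For the inductive step ($i\ge3$) I would first transfer the two hypotheses to the grandchildren. Their color index is $k-2\ge 2i-2=2(i-1)$, so the color-index condition holds for the parameter $i-1$. For completeness, I would observe that $S_{i-1}(v_\ell)\subset S_i(v)$ and that any pair $u\rightarrow_{i-1} w$ inside $S_{i-1}(v_\ell)$ is also a pair $u\rightarrow_i w$ inside $S_i(v)$; hence completeness of $S_i(v)$ forces such $u,w$ on opposite sides to be adjacent, i.e.\ each $S_{i-1}(v_\ell)$ is complete. The induction hypothesis then yields induced copies $Y_1,Y_2$ of $X_{i-1}$ in $S_{i-1}(v_1),S_{i-1}(v_2)$, rooted at $v_1,v_2$ respectively. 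Since $S_{i-1}(v_1)$ and $S_{i-1}(v_2)$ are disjoint with no edges between them, the same holds for $Y_1$ and $Y_2$, supplying the two independent copies in the definition of $X_i$.

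It remains to attach $v$, which is the step requiring the most care. I would use that completeness makes $v$ universal in $S_i(v)$: as $v\rightarrow_i w$ for every $w\in S_i(v)$, vertex $v$ is adjacent to every vertex of $S_i(v)$ opposite to it and, by bipartiteness, to none on its own side. As $v_1,v_2$ lie on opposite sides, exactly one of $Y_1,Y_2$ has its root opposite $v$; I designate that copy as $X^1_{i-1}$ and the other as $X^2_{i-1}$. Then the root side of $X^1_{i-1}$ together with the non-root side of $X^2_{i-1}$ is exactly the set of vertices of $Y_1\cup Y_2$ opposite $v$, all of which are adjacent to the universal vertex $v$, while the remaining two sides lie on $v$'s side and are non-adjacent to $v$. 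This is precisely the adjacency pattern of $X_i$, so $\set{v}\cup Y_1\cup Y_2$ induces $X_i$ with $v$ as root. I expect the main obstacle to be this bipartition bookkeeping: correctly identifying the root side of each $Y_\ell$, verifying the four adjacency conditions between $v$ and the two copies, and checking that universality delivers \emph{all} required edges while bipartiteness excludes the forbidden ones.
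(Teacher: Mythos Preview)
Your argument is correct and follows essentially the same inductive route as the paper: both use $S_i(v)=\{v\}\cup S_{i-1}(v_1)\cup S_{i-1}(v_2)$, transfer completeness to the grandchildren, apply induction to obtain two disjoint non-adjacent copies of $X_{i-1}$, and then use universality of $v$ together with $v_1,v_2$ lying on opposite sides to assemble $X_i$. The only cosmetic differences are that the paper treats $i=2$ within the inductive step rather than as a separate base case, and that you spell out the bipartition bookkeeping (which copy plays the role of $X^1_{i-1}$ versus $X^2_{i-1}$) more explicitly than the paper does.
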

 \begin{proof}
 We prove the claim by induction on $i$.
 For $i=1$ we work with $S_1(v)$ and $X_1$ being graphs with one vertex only, so the statement is trivial.
 For $i\geq2$, let $v_1$ and $v_2$ be the grandchildren of $v$.
 Recall that $S_i(v)=\{v\}\cup S_{i-1}(v_1)\cup S_{i-1}(v_2)$.
 Since $S_i(v)$ is complete it also follows that $S_{i-1}(v_1)$ and $S_{i-1}(v_2)$ are complete.
 So by the induction hypothesis there are induced copies $X_{i-1}^1$, $X_{i-1}^2$ of $X_{i-1}$ in $S_{i-1}(v_1)$ and $S_{i-1}(v_2)$, respectively, and rooted in $v_1$, $v_2$, respectively.
 Recall that $S_{i-1}(v_1)$ and $S_{i-1}(v_2)$ are disjoint and there is no edge between them.
 Thus, the copies $X_{i-1}^1$ and $X_{i-1}^2$ of $X_{i-1}$ are disjoint and there is no edge between them, as well.
 Since $S_i(v)$ is complete $v$ is universal to both of the copies, and since $v_1$ and $v_2$ lie on opposite sides in $G$ we get that the vertices of $X_{i-1}^1 \cup X_{i-1}^2 \cup \set{v}$ induce a copy of $X_i$ in $G$.
 \end{proof}
 
 \begin{claim}\label{claim:P8-main-claim}
  Suppose an adversary presents a $P_9$-free bipartite graph $G$ to Algorithm~\ref{algo:bicolormax+} and suppose vertex $v$ is colored with $a_{k}$ and $k\geq2$.
  Then $C_{k}[v]$ contains an induced copy of $X_{\lfloor\sqrt{k/2}\rfloor}$ such that its root lies on the same side as $v$ in $G$.
 \end{claim}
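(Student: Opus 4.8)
The plan is to reduce the statement to the \emph{completeness} of the descendant set $S_m(v)$ for $m=\lfloor\sqrt{k/2}\rfloor$, and then feed this into Claim~\ref{claim:full-tree}. Setting $m=\lfloor\sqrt{k/2}\rfloor$ gives $2m\leq\sqrt{2k}\leq k$ (using $k\geq2$), so in particular $k\geq 2m$, which is exactly the hypothesis Claim~\ref{claim:full-tree} requires. Since $S_m(v)\subset C_{k-1}[v]\subseteq C_k[v]$ and the copy of $X_m$ produced by that claim is rooted at $v$ itself, the root automatically lies on $v$'s side and inside $C_k[v]$. Thus the whole task is to prove that $S_m(v)$ is complete.

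I would prove completeness in the strengthened inductive form: \emph{for every vertex $p$ colored $a_{k_p}$ with $2i\leq\sqrt{2k_p}$, the set $S_i(p)$ is complete.} The base case $i=1$ is immediate since $S_1(p)=\set{p}$. For the step I use $S_i(p)=\set{p}\cup S_{i-1}(p_1)\cup S_{i-1}(p_2)$, where $p_1,p_2$ are the grandchildren of $p$ (hence colored $a_{k_p-2}$), together with the fact from Claim~\ref{claim:disjoint-components-p8} that the two subtrees are disjoint with no edges between them. Completeness of $S_i(p)$ then splits into: (a) completeness of each subtree $S_{i-1}(p_j)$, and (b) adjacency of $p$ to every vertex of $S_i(p)$ lying on $p$'s opposite side. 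Part (a) follows from the induction hypothesis applied to $p_1,p_2$, once I verify the parameter condition $2(i-1)\leq\sqrt{2(k_p-2)}$; this reduces to $\sqrt{2k_p}-2\leq\sqrt{2(k_p-2)}$, which holds for $k_p\geq3$, while $k_p=2$ forces $i=1$ and lands in the base case.

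The heart of the argument is part (b). I would take $w\in S_i(p)$ on $p$'s opposite side and assume, for contradiction, that $p$ and $w$ are non-adjacent. Every such $w$ satisfies $w\in C_{k_p-1}(p)$, and by the grandchild-selection rule the component $C^w_{k_p-1}[p]$ contains no induced $P_5$ ending at $p$; hence Claim~\ref{claim:path-vs-universal} (with $x=p$, $y=w$) yields a neighbor $u'$ of $p$ that is universal to $C_{j-1}[w]$, where $j$ is the color index of $w$. Being a neighbor of $p$, this $u'$ lies on $p$'s opposite side, i.e.\ in the part $I_2$ inspected by the algorithm when $p$ was colored. Moreover $w\in S_i(p)$ has color index $j\geq k_p-2i+2\geq k_p-\sqrt{2k_p}+2$, using $2i\leq\sqrt{2k_p}$. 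Therefore, at the moment $p$ was presented, the pair $(w,u')$ satisfies the third if-condition of Algorithm~\ref{algo:bicolormax+}: a vertex $w\in I_1\cup I_2$ of color index at least $k_p-\sqrt{2k_p}+2$ together with a vertex $u'\in I_2$ universal to $C_{j-1}[w]$. Consequently $p$ would have received color $c_{k_p}$ rather than $a_{k_p}$, a contradiction. Hence $p$ is adjacent to every opposite-side vertex of $S_i(p)$, closing the induction. Applying this with $p=v$, $k_p=k$, $i=m$ gives that $S_m(v)$ is complete, and Claim~\ref{claim:full-tree} finishes the proof.

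I expect the main obstacle to be the bookkeeping that makes the threshold line up uniformly across the whole tree: one must check that the smallest color index occurring in $S_i(p)$, namely $k_p-2i+2$, stays above the algorithm's $c$-threshold $k_p-\sqrt{2k_p}+2$ simultaneously at every node. Writing the requirement at a node of depth $t$ as $2(i-t)\leq\sqrt{2(k-2t)}$ and checking by monotonicity in $t$ that the binding case is $t=0$ is precisely what pins down $m=\lfloor\sqrt{k/2}\rfloor$ as the correct size of the forced $X_m$. Everything else is a matter of threading the hypotheses of Claim~\ref{claim:path-vs-universal} correctly and reading the failure of the third if-condition off from the fact that $v$ was assigned an $a$-color.
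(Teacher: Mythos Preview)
Your argument has a genuine gap at the heart of part (b). You assume that if the pair $(w,u')$ satisfies the third if-condition of Algorithm~\ref{algo:bicolormax+}, then $p$ would have been colored $c_{k_p}$. But the algorithm only reaches the third test after the second one fails: if there is already a vertex of color $c_{k_p}$ on the side $I_2$ at the moment $p$ arrives, then $p$ receives $a_{k_p}$ via the second branch regardless of whether the third condition is met. So your contradiction evaporates precisely in the case where $p$ is colored $a_{k_p}$ because of a $c_{k_p}$-witness in $I_2$. Nothing in your setup rules this out, and in fact the paper's own proof treats the complete and non-complete cases of $S_{\lfloor\sqrt{k/2}\rfloor}(v)$ separately, which indicates that $S_{\lfloor\sqrt{k/2}\rfloor}(v)$ need not be complete in general.

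The paper's route is different: it does induction on $k$, not on $i$. When $S_{\lfloor\sqrt{k/2}\rfloor}(v)$ is complete, Claim~\ref{claim:full-tree} finishes as you say. When it is not, the paper does not try to force completeness; instead it extracts a vertex $z'$ on $v$'s side that is universal to some $C_{\ell-1}[z]$ with $\ell\geq k-\sqrt{2k}+2$. If $v$ was colored via the last line, this $z'$ comes from Claim~\ref{claim:path-vs-universal} exactly as in your argument; if $v$ was colored via the second line, one reads $z,z'$ off the witnesses that caused the neighboring $c_k$-vertex to be colored $c_k$ (those witnesses have a universal vertex on $v$'s side by construction). In either case the induction hypothesis is then applied to the $a_{\ell-1}$-colored children $z_1,z_2$ of $z$, and the two resulting copies of $X_{\lfloor\sqrt{(\ell-1)/2}\rfloor}$ are glued under $z'$. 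Your proposal is missing both this second source of a universal vertex and the outer induction on $k$ that makes use of it.
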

  \begin{proof}
   We prove the claim by induction on $k$.
   For $k=2$ the statement is trivial.
   So suppose that $k\geq 3$.
   If $S_{\lfloor\sqrt{k/2}\rfloor}(v)$ is complete then by Claim~\ref{claim:full-tree} we get an induced copy of $X_{\lfloor\sqrt{k/2}\rfloor}$ with a root mapped to $v$, as required.
   
   From now on we assume that $S_{\lfloor\sqrt{k/2}\rfloor}(v)$ is not complete.
   Let $(I_1,I_2)$ be the bipartition of $C_{k}[v]$ such that $v\in I_1$.
   First, we will prove that there are vertices $z,z'\in C_{k}[v]$ such that $z'\in I_1$, $z$ has color index $\ell \geq k-\sqrt{2k}+2$ and $z'$ is universal to $C_{\ell-1}[z]$.
   To do so we consider the reason why Algorithm~\ref{algo:bicolormax+} colors $v$ with $a_{k}$.
   
   The first possibility is that the second if-condition of the algorithm is satisfied, that is, there is a vertex $u\in I_2$ colored with $c_{k}$.
   Now $u$ can only receive color $c_{k}$ if there are vertices $w, w'\in C_{k}[u]$ such that $w'$ is on the other side of $u$ in $C_{k}[u]$, $w$ has color index $j\geq k-\sqrt{2k}+2$ 
   and $w'$ is universal to $C_{j-1}[w]$.
   Since $C_{k}(u)\subseteq C_{k}(v)$ and $u\in I_2$ we have $w'\in I_1$.
   Therefore, $z=w$ and $z'=w'$ are vertices we are looking for.
   
   The second reason for coloring $v$ with $a_{k}$ could be that Algorithm~\ref{algo:bicolormax+} reaches its last line.
   In particular this means, that there is no vertex of color $a_{k}$ or $c_{k}$ in $I_2$.
   Now we are going to make use of the fact that $S_{\lfloor\sqrt{k/2}\rfloor}(v)$ is not complete.
   There are vertices $x,y\in S_{\lfloor\sqrt{k/2}\rfloor}(v)\subseteq C_{k}[v]$ such that $x\rightarrow_{\lfloor\sqrt{k/2}\rfloor} y$, the vertices $x$ and $y$ lie on different sides of $C_{k}[v]$ and are not adjacent.
   Let $i$ and $j$ be the color indices of $x$ and $y$, respectively.
   Note that $k\geq i > j\geq k-2\lfloor\sqrt{k/2}\rfloor+2$.
   By the definition of a grandchild it follows that $C_{i-1}^y[x]$ does not contain an induced $P_5$ with one endpoint in $x$.
   Hence we can apply Claim~\ref{claim:path-vs-universal} and it follows that $x$ has a neighbor $r\in C_{i-1}^y(x)$ that is universal to $C_{j-1}[y]$.
   We set $z'=r$ and $z=y$.
   Then, we have that $z'$ is universal to $C_{j-1}[z]$ with
   \[
    j \geq k-2\lfloor\sqrt{k/2}\rfloor+2\geq k-\sqrt{2k}+2.
   \]
   Since $z'\in C_k[v]$ we have $z'\in I_1$ or $z'\in I_2$.   
   However, the latter is not possible as otherwise $z$ and $z'$ would fulfill the conditions of the third if-statement in Algorithm~\ref{algo:bicolormax+}, which contradicts the fact that Algorithm~\ref{algo:bicolormax+} reached the last line while processing $v$.
   We conclude that $z'\in I_1$, which completes the proof of our subclaim.
   
   Now fix $z,z'$ witnessing our subclaim.
   Let $z_1$ and $z_2$ be the children of $z$.
   Both vertices received color $a_{\ell-1}$ and they are on different sides of $G$.
   By the induction hypothesis $C_{\ell-1}[z_1]$ and $C_{\ell-1}[z_2]$ contain a copy of $X_{\lfloor\sqrt{(\ell-1)/2}\rfloor}$ such that the roots are on the same side as $z_1$ and $z_2$, respectively.
   Since there is no edge between $C_{\ell-1}[z_1]$ and $C_{\ell-1}[z_2]$ (this is a consequence of Claim~\ref{claim:disjoint-components-p8}) and both are contained in $C_{\ell-1}[z]$, it follows that $z'$ together with the copies of $X_{\lfloor\sqrt{(\ell-1)/2}\rfloor}$ induce a copy of $X_{\lfloor\sqrt{(\ell-1)/2}\rfloor+1}$ that has $z'$ as its root (see Figure~\ref{fig:claim8}).
   Since $C_{\ell-1}[z]$ is contained in $C_{k}[v]$ and $z'$ is on the same side as $v$ and since
   \[
   \lfloor\sqrt{(\ell-1)/2}\rfloor+1\geq
   \left\lfloor\sqrt{(k-\sqrt{2k}+1)/2}\right\rfloor+1\geq \left\lfloor\sqrt{k/2-\sqrt{k/2}}\right\rfloor +1\geq \left\lfloor\sqrt{k/2}\right\rfloor,
   \]
   for all $k\geq 0$, the proof is complete.
   \end{proof}
   
   \begin{figure}[t]
    \centering
    \includegraphics{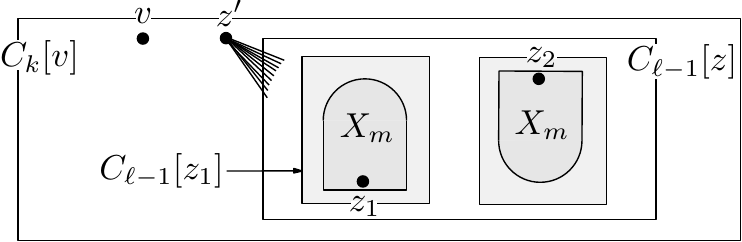}
    \caption{Final step in Claim~\ref{claim:P8-main-claim}. The value $m$ stands for $\lfloor\sqrt{(\ell-1)/2}\rfloor$.}
    \label{fig:claim8}
   \end{figure}

Now we are able to prove our main theorem.
\begin{proof}[Proof of Theorem~\ref{thm:main}]
  Let $k$ be the largest color index used by Algorithm~\ref{algo:bicolormax+} while coloring vertices of $G$.
  In particular, Algorithm~\ref{algo:bicolormax+} uses at most $3k$ colors for $G$.
  If $k=1$ then the statement is obvious.
  Suppose $k\geq2$.
  There must be a vertex in $G$ colored with $a_k$.
  By Claim~\ref{claim:P8-main-claim} it follows that $G$ contains $X_{\lfloor\sqrt{k/2}\rfloor}$
  and by Claim~\ref{claim:forcing-structure}, $\chi_*(G)\geq \lfloor\sqrt{k/2}\rfloor \geq \sqrt{k/2} -1$.
  This together with $3k= 6(\sqrt{k/2}-1+1)^2\leq 6(\chi_*(G)+1)^2$ completes the proof.
 \end{proof}

\bibliographystyle{plain}
\bibliography{online-comp}

\end{document}